\documentclass{article}
\usepackage{color}
\usepackage{algorithm}
\usepackage[noend]{algorithmic}
\usepackage{amsfonts}
\usepackage{amsthm}
\usepackage{times}
\usepackage{amsmath}
\usepackage{fullpage}

\title{Greedy Minimization of Weakly Supermodular Set Functions}

% Authors with different addresses:
\author{Christos Boutsidis\thanks{boutsidis@yahoo-inc.com, Yahoo Labs, New York, NY}
\and
Edo Liberty\thanks{edo@yahoo-inc.com, Yahoo Labs, New York, NY}
\and
Maxim Sviridenko\thanks{sviri@yahoo-inc.com, Yahoo Labs, New York, NY}
}
\date{}

%  Our definitions
\newcommand{\eps}{\varepsilon}
\newcommand{\R}{\mathbb{R}}

\newcommand{\T}{{\scriptscriptstyle{  T}}}
\newcommand{\inv}{{\scriptstyle{  \textrm +}}}

\newtheorem{definition}{Definition}
\newtheorem{theorem}{Theorem}
\newtheorem{lemma}{Lemma}
\begin{document} 
\maketitle

\begin{abstract}
This paper defines {\it weak-$\alpha$-supermodularity} for set functions.
Many optimization objectives in machine learning and data mining seek to minimize such functions under cardinality constrains.
We prove that such problems benefit from a greedy extension phase.
Explicitly, let $S^*$ be the optimal set of cardinality $k$ that minimizes $f$ and let $S_0$ be an initial solution such that $f(S_0)/f(S^*) \le \rho$.
Then, a greedy extension $S \supset S_0$ of size $|S| \le |S_0| + \lceil \alpha k \ln(\rho/\eps) \rceil$ yields $f(S)/f(S^*) \le 1+\eps$.
As example usages of this framework we give new bicriteria results for $k$-means, sparse regression, and columns subset selection.
\end{abstract}

%%%%%%%%%%%%%%%%%%%%%%%%%%%%%%%%%%%%%%%%%%%%%%
\section{Introduction}
Many problems in data mining and unsupervised machine learning take the form of minimizing set functions with cardinality constraints.
More explicitly, denote by $[n]$ the set $\{1,\dots,n\}$ and $f(S):2^{[n]}\rightarrow \R_+$.
Our goal is to minimize $f(S)$ subject to $|S|\le k$. 
These problems include clustering and covering problems as well as regression, matrix approximation problems and many others.
These combinatorial problems are hard to minimize in general.
Finding good (e.g.\ constant factor) approximate solutions for them requires significant sophistication and highly specialized algorithms.

In this paper we analyze the behavior of the greedy algorithm to all these problems.
We start by claiming that the functions above are special.
A trivial observation is that they are non-negative and non-increasing, that is, $f(S \cup T)  \le  f(S)$ for any $S,T \subseteq [n]$.
This immediately shows that expanding solution sets is (at least potentially) beneficial in terms of reducing the function value.
But, monotonicity is not enough to ensure that any number of greedy extensions of a given solution would reduce the objective function.

To this end we need to somehow quantify the gain of adding a single element (greedily) to a solution set.
Let $f(S)- f(S\cup T)$ be the reduction in $f$ one gains by adding a set of elements $T$ to the current solution $S$.
Then, the average gain of adding elements from $T$ \emph{sequentially} is $[f(S)- f(S\cup T)]/|T \setminus S|$.
One would hope that there exists an element in $i \in T \setminus S$ such $f(S)- f(S\cup \{i\}) \ge [f(S)- f(S\cup T)]/|T \setminus S|$
but that would be false, in general, since the different element contributions are not independent of each other.
Lemma~\ref{name}, however, shows that this is true for supermodular functions. 
\begin{definition}
A  set function $f(S):2^{[n]}\rightarrow \R_+$ is said to be {\it supermodular} 
if  for any two sets $S,T\subseteq [n]$
\begin{equation}\label{supermo1}
f(S\cap T)+ f(S\cup T) \ge  f(S )+f( T).
\end{equation}
\end{definition}

Combining this fact with the idea that $T$ could be any set, 
including the optimal solution $S^*$, already gives some useful results for minimizing supermodular set functions.
Specifically those for which $f(S^*)$ is bounded away from zero.
Notice that $k$-means is exactly this kind of problem. 
Section~\ref{clustering} gives some new bicriteria results obtainable for $k$-means via the greedy extension algorithm of Section~\ref{main}.
A similar intuition gives a very famous result that the greedy algorithm provides a $(1- 1/e)$-factor approximation for maximizing set functions $g(S)$ subject to $|S|\le k$ if $g$ for positive, monotone non-decreasing and submodular \cite{NemhauserWF1978}.

Alas, most problems of interest, such as regression, columns subset selection, feature selection, and outlier detection (and many others) are not supermodular.
In Section~\ref{sviri-rocks} we define the notion of weak-$\alpha$-supermodularity. 
Intuitively, weak-$\alpha$-supermodular functions are those conducive to greedy type algorithms.
Or, alternatively, the inequality above holds up to some constant $\alpha > 1$.
Weak-$\alpha$-supermodularity requirers that there exists an element $i \in T \setminus S$ such that adding $i$ \emph{first} 
gains at least $[f(S)- f(S\cup T)]/\alpha|T \setminus S|$ for some $\alpha \ge 1$.

As an example for this framework we show in Section~\ref{smlr} that Sparse Multiple Linear Regression (SMLR) is weak-$\alpha$-supermodular.
Using this fact we extend (and slightly improve) the result of \cite{Natarajan1995} for Sparse Regression and 
obtain new bicriteria results for Columns Subset Selection.

%%%%%%%%%%%%%%%%%%%%%%%%%%%%%%%%%%%%%%%%%%%%%%
\section{Weakly Supermodular Set Functions}\label{sviri-rocks}
In this section, we define our notation and the notion of {\emph weak-$\alpha$-supermodularity}.
Throughout the manuscript we denote by $[n]$ the set $\{1,\dots,n\}$.
We concern ourselves with non-negative set function $f(S):2^{[n]}\rightarrow \R_+$.
More specifically monotone non-increasing set function such that $f(S) \ge f(S \cup T)$ for any two sets $S \subseteq [n]$ and $T\subseteq[n]$.

\begin{definition}
A non-negative non-increasing set function $f(S):2^{[n]}\rightarrow \R_+$ is said to be {\it weakly-$\alpha$-supermodular} 
if there exists $\alpha \ge 1$ such that for any two sets $S,T\subseteq [n]$
\begin{equation}\label{supermo}
f(S)- f(S\cup T) \le \alpha\cdot|T\setminus S|\cdot \max_{i \in T\setminus S}[f(S)- f(S\cup \{i\})] \ .
\end{equation}
\end{definition}
This property is useful because we will later try to minimize $f$.   
It asserts that if adding $T \setminus S$ is beneficial then there is an element $i \in T\setminus S$ that contributes at least a fraction of that.
The reason for the name of this property might also be explained by the following definition and  lemma.

\begin{lemma}\label{name}
A non-increasing non-negative supermodular function $f$ is weakly-$\alpha$-supermodular with parameter $\alpha=1$. %  function with respect to any set $T\subseteq [n]$.
\end{lemma}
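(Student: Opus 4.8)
The plan is to reduce the claim to a single ``diminishing gains'' property of supermodular functions and then telescope. Write $m = |T\setminus S|$ and enumerate $T\setminus S = \{i_1,\dots,i_m\}$ in any fixed order. I would introduce the partial sets $S_j = S \cup \{i_1,\dots,i_j\}$, so that $S_0 = S$ and $S_m = S\cup(T\setminus S) = S\cup T$. Since $f$ is non-increasing, each marginal reduction $f(S_{j-1}) - f(S_j)$ is non-negative, and because $S_j = S_{j-1}\cup\{i_j\}$ the left-hand side of \eqref{supermo} telescopes cleanly as
\[
f(S) - f(S\cup T) = \sum_{j=1}^{m} \bigl[\, f(S_{j-1}) - f(S_{j-1}\cup\{i_j\}) \,\bigr].
\]

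The core step is the inequality that marginal reductions shrink as the base set grows: for any $A \subseteq B$ and any $i \notin B$,
\[
f(A) - f(A\cup\{i\}) \ge f(B) - f(B\cup\{i\}).
\]
I would derive this directly from the supermodular inequality \eqref{supermo1} applied to the pair $A\cup\{i\}$ and $B$. Since $A\subseteq B$ and $i\notin B$, the intersection of these two sets is exactly $A$ and their union is exactly $B\cup\{i\}$, so \eqref{supermo1} reads $f(A) + f(B\cup\{i\}) \ge f(A\cup\{i\}) + f(B)$, which rearranges to the displayed inequality. Instantiating it with $A = S$, $B = S_{j-1}$, and $i = i_j$ (legitimate because $S \subseteq S_{j-1}$ and $i_j \notin S_{j-1}$) bounds each telescoping term by the corresponding reduction measured from the base set $S$:
\[
f(S_{j-1}) - f(S_{j-1}\cup\{i_j\}) \le f(S) - f(S\cup\{i_j\}).
\]

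Finally I would bound each of the $m$ terms by the largest single-element reduction and collapse the sum. Summing over $j$ and using $f(S) - f(S\cup\{i_j\}) \le \max_{i\in T\setminus S}[f(S)-f(S\cup\{i\})]$ yields
\[
f(S) - f(S\cup T) \le \sum_{j=1}^{m}\bigl[f(S)-f(S\cup\{i_j\})\bigr] \le |T\setminus S|\cdot \max_{i\in T\setminus S}[f(S)-f(S\cup\{i\})],
\]
which is precisely \eqref{supermo} with $\alpha = 1$.

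I do not expect a serious obstacle, as the argument is short, but one point needs care: when invoking supermodularity one must apply \eqref{supermo1} to the pair $A\cup\{i\}$ and $B$ rather than to $A$ and $B\cup\{i\}$, so that the intersection collapses to $A$; the other grouping produces only a vacuous statement. A secondary sanity check is that the enumeration order of $T\setminus S$ is irrelevant, which holds because the telescoping identity is valid for every ordering and the final bound never depends on which ordering was chosen.
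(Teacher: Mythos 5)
Your proof is correct and follows essentially the same route as the paper: both telescope $f(S)-f(S\cup T)$ over the partial sets $S\cup\{i_1,\dots,i_j\}$ and bound each marginal term via the same application of supermodularity (the paper's Equation~\eqref{superaa} is exactly your diminishing-gains inequality with $A=S$, $B=S_{j-1}$, $i=i_j$). The only difference is cosmetic: you spell out how that inequality follows from \eqref{supermo1} applied to the pair $A\cup\{i\}$ and $B$, a derivation the paper leaves implicit.
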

\begin{proof} 
For $S,T \subseteq [n]$ order the set $T\setminus S$ in an arbitrary order, i.e. $T\setminus S=\{i_1,\dots,i_{|T\setminus S|}\}$.
Define $R_0=\emptyset$ and  $R_t=\{i_1,\dots i_t\}$ for $t>0$.
By supermodularity we have for any $t$
\begin{eqnarray}\label{superaa}
f(S)- f(S\cup \{i_{t}\}) \ge f(S\cup R_{t-1})- f(S\cup R_{t-1} \cup \{i_{t}\} )
\end{eqnarray}
We note that $R_{t-1} \cup \{i_{t}\} = R_{t}$  and sum up Equation~\eqref{superaa}.
\begin{eqnarray*}
\sum_{t=1}^{|T\setminus S|}  [f(S)- f(S\cup \{i_t\})] \ge \sum_{t=1}^{|T\setminus S|} f(S\cup R_{t-1})- f(S\cup R_{t-1}\cup \{i_{t}\}) = f(S)- f(S\cup T) \ .
\end{eqnarray*}
Since  $|T\setminus S|\cdot\max_{i \in T\setminus S}[f(S)- f(S\cup \{i\})] \ge \sum_{t=1}^{|T\setminus S|}  [f(S)- f(S\cup \{i_t\})]$ this implies weak-$1$-supermodularily.
\end{proof}

%%%%%%%%%%%%%%%%%%%%%%%%%%%%%%%%%%%%%%%%%%%%%%
\section{Greedy Extension Algorithm}\label{main}
We are given a non-increasing weakly-$\alpha$-supermodular set function $f(S)$ and would like to solve the following optimization problem
\begin{equation}\label{mainProblem}
\min \{f(S):  |S|\le k\}.
\end{equation}

Consider a simple greedy algorithm that starts with some initial solution $S_0$ of 
value $f(S_0)$ (maybe $S_0=\emptyset$) and sequentially and greedily adds elements to it to minimize $f$.  
\begin{algorithm}[h!]
\begin{algorithmic}
\STATE {\bf input:} Weakly-$\alpha$-supermodular function $f$,  $S_0$, $k$, $E$
\FOR {$t = 1,\ldots, \lceil \alpha k \ln( f(S_0)/E)\rceil$}
	\STATE $S_{t} \gets S_{t-1} \cup \arg\min_{i \in [n]} f(S_{t} \cup \{i\})$
\ENDFOR
\STATE {\bf output:} $S_t$
\end{algorithmic}
\caption{Greedy Extension Algorithm}\label{greedy}
\end{algorithm}

 \begin{theorem}\label{mainSupermodular}
Let $S_\tau$ be the output of Algorithm~\ref{greedy}. Then $|S_\tau| \le |S_0| + \lceil \alpha k \ln( f(S_0)/E)\rceil$ 
and $f(S_\tau) \le f(S^*)+E$  where $S^*$ is an optimal solution of the optimization problem \eqref{mainProblem}.
\end{theorem}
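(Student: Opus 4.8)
The plan is to track the \emph{optimality gap} $\delta_t := f(S_t)-f(S^*)$ and show it contracts by a fixed multiplicative factor at every greedy step, so that after $\tau=\lceil \alpha k \ln(f(S_0)/E)\rceil$ steps it falls below $E$. The size bound is immediate, since each iteration adds at most one element and hence $|S_\tau|\le |S_0|+\tau$; the content is entirely in the function-value bound.

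First I would apply the defining inequality \eqref{supermo} with $S=S_{t-1}$ and $T=S^*$, which gives $f(S_{t-1})-f(S_{t-1}\cup S^*)\le \alpha\,|S^*\setminus S_{t-1}|\,\max_{i\in S^*\setminus S_{t-1}}[f(S_{t-1})-f(S_{t-1}\cup\{i\})]$. Two elementary facts then simplify both sides. On the left, since $S^*\subseteq S_{t-1}\cup S^*$ and $f$ is non-increasing, $f(S_{t-1}\cup S^*)\le f(S^*)$, so the left-hand side is at least $f(S_{t-1})-f(S^*)=\delta_{t-1}$. On the right, $|S^*\setminus S_{t-1}|\le |S^*|\le k$. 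Crucially, the greedy rule selects the element of $[n]$ maximizing the single-element gain $f(S_{t-1})-f(S_{t-1}\cup\{i\})$, and this global maximum is at least the maximum restricted to $S^*\setminus S_{t-1}$; hence the realized per-step decrease $f(S_{t-1})-f(S_t)$ dominates that inner maximum. Combining, $\delta_{t-1}\le \alpha k\,[f(S_{t-1})-f(S_t)] = \alpha k\,(\delta_{t-1}-\delta_t)$.

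Rearranging yields $\alpha k\,\delta_t\le(\alpha k-1)\delta_{t-1}$, that is $\delta_t\le(1-\tfrac{1}{\alpha k})\delta_{t-1}$ whenever $\delta_{t-1}>0$. Iterating from $t=1$, using $\delta_0=f(S_0)-f(S^*)\le f(S_0)$ together with $1-x\le e^{-x}$, gives $\delta_\tau\le e^{-\tau/(\alpha k)}f(S_0)$. Substituting $\tau\ge \alpha k\ln(f(S_0)/E)$ makes the exponential factor at most $E/f(S_0)$, so $\delta_\tau\le E$, which is precisely $f(S_\tau)\le f(S^*)+E$.

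The one point requiring care—and the only real obstacle—is that $S_\tau$ may have more than $k$ elements, so a priori $f(S_t)$ could drop below $f(S^*)$ and render $\delta_t$ negative, breaking the multiplicative recursion. This is harmless: if $\delta_t\le 0$ at some step, then already $f(S_t)\le f(S^*)\le f(S^*)+E$, and since $f$ is non-increasing the same bound holds for every later iterate. Thus one splits into two cases—either the gap stays positive throughout and the geometric decay above applies, or it reaches zero early and monotonicity finishes the argument—after which everything else is routine manipulation.
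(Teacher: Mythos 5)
Your proof is correct and follows essentially the same route as the paper's: apply weak-$\alpha$-supermodularity with $T=S^*$, use monotonicity and $|S^*\setminus S_{t-1}|\le k$, let the greedy choice dominate the inner maximum, and iterate the resulting contraction $\delta_t\le(1-\tfrac{1}{\alpha k})\delta_{t-1}$ with $1-x\le e^{-x}$. Your closing case analysis for a possibly negative gap is a small extra precaution the paper omits (and which in fact is unnecessary, since the contraction inequality and the final bound hold regardless of the sign of $\delta_t$), not a different approach.
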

\begin{proof}
The fact that $|S_\tau| \le |S_0| + \lceil \alpha k \ln( f(S_0)/E)\rceil$ is a trivial observation.
%First, we can safely assume that  $\lceil \alpha k \log( f(S_0)/E)\rceil < n - |S_0|$. 
%Otherwise, the algorithm return $S = [n]$ and $f([n]) \le f(S^*)$ simply by monotonicity.  
%
For the second claim consider an arbitrary iteration $t \in  [\tau]$ and consider   the set $S^*\setminus S_{t-1}$. 
By monotonicity and weak $\alpha$-supermodularity
\begin{eqnarray*}
f(S_{t-1})-f( S^* ) &\le&  f(S_{t-1})-f(S_{t-1}\cup S^* ) \le \alpha k \cdot \max_{i \in S^* \setminus S_{t-1}} f(S_{t-1})-f(S_{t-1} \cup \{i\})\\
&\le& \alpha k \cdot\max_{i \in [n]} f(S_{t-1})-f(S_{t-1} \cup \{i\}) = \alpha k \cdot ( f(S_{t-1})-f(S_{t})) \ .
\end{eqnarray*}
By rearranging the above equation and recursing over $t$ we get
\begin{equation*}
f(S_t)-f(S^*) \le \left( f(S_{t-1})-f(S^* ) \right) \left(1-1/\alpha k \right) \le\left( f(S_{0})-f(S^* ) \right)\left(1 - 1/\alpha k\right)^t 
\end{equation*}
\noindent Substituting $\tau = \lceil \alpha k \ln( f(S_0)/E)\rceil$ for the last step of the algorithm completes the proof.
\begin{eqnarray*}
f(S_\tau)-f(S^*) &\le&\left( f(S_{0})-f(S^* ) \right)\left(1-1/\alpha k\right)^{\alpha k \ln( f(S_0)/E)} \\
&\le&\left( f(S_{0})-f(S^* ) \right)e^{-\ln( f(S_0)/E)} \le E.
\end{eqnarray*}
\end{proof}
\begin{theorem}\label{multiplicative}
Assume there exist a $\rho$-approximation algorithm creating $S_0$ such that $f(S_0) \le \rho f(S^*)$.
There exists an algorithm for generating $S$ such that $|S| \le  |S_0| + \lceil \alpha k\left( \ln \frac{\rho }{\eps} \right)\rceil$ and $f(S) \le (1+\eps)f(S^*)$.
\end{theorem}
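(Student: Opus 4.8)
The plan is to reduce Theorem~\ref{multiplicative} directly to Theorem~\ref{mainSupermodular}, which has already done all the real work. Theorem~\ref{mainSupermodular} guarantees an \emph{additive} error bound of the form $f(S_\tau) \le f(S^*) + E$, where the extension length is $\lceil \alpha k \ln(f(S_0)/E)\rceil$. Here we instead want a \emph{multiplicative} guarantee $f(S) \le (1+\eps)f(S^*)$, so the natural move is to choose the additive tolerance $E$ to be a suitable multiple of $f(S^*)$, namely $E = \eps\, f(S^*)$. With this choice, the conclusion $f(S) \le f(S^*) + E = (1+\eps)f(S^*)$ follows immediately, so the only content is to check that the \emph{length} of the extension comes out as claimed.

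First I would invoke the hypothesis that a $\rho$-approximation is available, giving an initial set $S_0$ with $f(S_0) \le \rho\, f(S^*)$. Then I would run Algorithm~\ref{greedy} with this $S_0$ and with the target value set to $E = \eps\, f(S^*)$. By Theorem~\ref{mainSupermodular}, the output set $S := S_\tau$ satisfies $f(S) \le f(S^*) + E = (1+\eps)f(S^*)$, which is exactly the desired approximation guarantee. It remains only to bound $|S|$.

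For the cardinality bound, Theorem~\ref{mainSupermodular} gives $|S| \le |S_0| + \lceil \alpha k \ln(f(S_0)/E)\rceil$. Substituting $E = \eps\, f(S^*)$ and using $f(S_0) \le \rho\, f(S^*)$, the logarithmic term becomes
\begin{equation*}
\ln\!\left(\frac{f(S_0)}{E}\right) = \ln\!\left(\frac{f(S_0)}{\eps\, f(S^*)}\right) \le \ln\!\left(\frac{\rho\, f(S^*)}{\eps\, f(S^*)}\right) = \ln\frac{\rho}{\eps}.
\end{equation*}
Because $\lceil\cdot\rceil$ is monotone, this yields $|S| \le |S_0| + \lceil \alpha k \ln(\rho/\eps)\rceil$, matching the statement. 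This establishes both claims, so the theorem follows.

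I do not anticipate a genuine obstacle here, since the argument is essentially a specialization of the additive result; the main thing to be careful about is the direction of the inequality when substituting the bound $f(S_0) \le \rho\, f(S^*)$ into the logarithm, and to note that monotonicity of $\ln$ and of the ceiling function both preserve the inequality. One should also implicitly assume $f(S^*) > 0$ so that dividing by $f(S^*)$ and taking the logarithm are well defined; this is exactly the regime (function value bounded away from zero) in which the framework is intended to apply, as the introduction already emphasizes.
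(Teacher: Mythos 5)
Your proposal takes the same route as the paper: a one-line reduction to Theorem~\ref{mainSupermodular} via a suitable choice of the additive tolerance $E$, and your chain of inequalities is correct. One caveat deserves attention: the theorem claims the existence of an \emph{algorithm}, and your choice $E = \eps\, f(S^*)$ is not computable, since $f(S^*)$ is unknown --- Algorithm~\ref{greedy} needs $E$ as an input to determine its iteration count $\lceil \alpha k \ln(f(S_0)/E)\rceil$. The fix is immediate: either set $E = \eps\, f(S_0)/\rho$, which is computable (given $\rho$), gives iteration count exactly $\lceil \alpha k \ln(\rho/\eps)\rceil$, and satisfies $E \le \eps\, f(S^*)$ so that $f(S) \le f(S^*) + E \le (1+\eps)f(S^*)$; or simply run the greedy loop for $\lceil \alpha k \ln(\rho/\eps)\rceil$ steps outright, which by monotonicity of $f$ along greedy extensions can only improve the additive bound you derived. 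It is worth noting that the paper's own proof sets $E = \eps\, f(S_0)$, which is computable but, if one traces through Theorem~\ref{mainSupermodular}, yields only $f(S) \le f(S^*) + \eps f(S_0) \le (1+\rho\eps)f(S^*)$ with iteration count $\lceil \alpha k \ln(1/\eps)\rceil$ --- i.e., the paper's one-liner does not literally match its stated bound (up to rescaling $\eps$ by $\rho$). Your analysis of the approximation factor is the correct one; combined with the computable choice $E = \eps\, f(S_0)/\rho$, it proves the theorem exactly as stated.
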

\begin{proof}
Use the $\rho$-approximation algorithm to create $S_0$ for Algorithm~\ref{greedy} and set $E = \eps f(S_0)$.
\end{proof}
\begin{algorithm}[h!]
\begin{algorithmic}
\STATE {\bf input:} Weakly-$\alpha$-supermodular function $f$,  $S_0$, $f_{\operatorname{stop}}$
%\STATE $t \gets 0$
\REPEAT  
	\STATE $S_{t} \gets S_{t-1} \cup \arg\min_{i} f(S_{t-1}\cup \{i\})$ 
\UNTIL {$f(S_{t}) \le f_{\operatorname{stop}}$}
\STATE {\bf output:} $S = S_t$
\end{algorithmic}
\caption{Greedy Extension Algorithm; an alternative stopping criterion}\label{greedy2}
\end{algorithm}

\begin{theorem}\label{modification}
Let $k_f$ be the minimal cardinality of a set $S'$ such that $f(S') \le f$.
For any $ f_{\operatorname{stop}}$ such that $f < f_{\operatorname{stop}}$  \ Algorithm~\ref{greedy2} outputs $S$ such that
%Let $k_f$ be the minimal size of a set $S'$ such that $f(S') \le f$.% be the minimum value obtainable by $f(S^*)$ such that $|S^*|=k$. 
%For any $k$ such that $f_{\operatorname{stop}} > f_{k}^{*}$ the greedy algorithm finds a solution $S$ such that  $f(S) \le f_{\operatorname{stop}}$ and 
$$|S|\le |S_0|+ \left\lceil \alpha k_f  \left(\ln \frac{f(S_0)}{f_{\operatorname{stop}}-f} \right) \right\rceil$$
\end{theorem}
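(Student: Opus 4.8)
The plan is to mirror the proof of Theorem~\ref{mainSupermodular} almost verbatim, with two changes: the comparison set $S^*$ is replaced by a minimum-cardinality \emph{witness} set for the target value $f$, and progress is measured relative to $f$ rather than to the optimum. Let $S'$ be a set attaining the minimum in the definition of $k_f$, so that $|S'| = k_f$ and $f(S') \le f$. As long as the algorithm has not yet halted we have $f(S_{t-1}) > f_{\operatorname{stop}} > f \ge f(S')$, so by monotonicity $S' \not\subseteq S_{t-1}$ and the set $S' \setminus S_{t-1}$ is non-empty; this guarantees that the maximum appearing in weak-$\alpha$-supermodularity is taken over a non-empty set.

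Next I would establish the per-iteration contraction. Applying the weak-$\alpha$-supermodularity inequality~\eqref{supermo} with $T = S'$, using $|S' \setminus S_{t-1}| \le |S'| = k_f$, and bounding the maximum over $i \in S' \setminus S_{t-1}$ by the greedy choice over all $i \in [n]$ (which is exactly what the algorithm selects at step $t$), I obtain
\begin{equation*}
f(S_{t-1}) - f(S') \le \alpha k_f \cdot \bigl( f(S_{t-1}) - f(S_t) \bigr).
\end{equation*}
The one point that requires care is the direction of the substitution that eliminates $S'$: since $f(S') \le f$ we have $f(S_{t-1}) - f \le f(S_{t-1}) - f(S')$, so the left-hand side may be replaced by the smaller quantity $f(S_{t-1}) - f$, yielding $f(S_{t-1}) - f \le \alpha k_f ( f(S_{t-1}) - f(S_t) )$.

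Then I would introduce the potential $\phi_t = f(S_t) - f$, which is strictly positive while the algorithm runs. Writing $f(S_{t-1}) - f(S_t) = \phi_{t-1} - \phi_t$ and rearranging the displayed inequality gives the geometric recursion $\phi_t \le (1 - 1/(\alpha k_f))\phi_{t-1}$, hence $\phi_t \le \phi_0 (1 - 1/(\alpha k_f))^t \le f(S_0)\, e^{-t/(\alpha k_f)}$, where I used $\phi_0 = f(S_0) - f \le f(S_0)$ and $1 - x \le e^{-x}$. The algorithm halts once $f(S_t) \le f_{\operatorname{stop}}$, i.e. once $\phi_t \le f_{\operatorname{stop}} - f$, and since $f_{\operatorname{stop}} - f > 0$ the logarithm below is well defined. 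Solving $f(S_0)\, e^{-t/(\alpha k_f)} \le f_{\operatorname{stop}} - f$ for $t$ shows that $t = \lceil \alpha k_f \ln(f(S_0)/(f_{\operatorname{stop}} - f)) \rceil$ extensions suffice to force the stopping condition, so the algorithm terminates within that many steps and $|S| \le |S_0| + \lceil \alpha k_f \ln(f(S_0)/(f_{\operatorname{stop}} - f)) \rceil$.

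The argument is essentially routine once Theorem~\ref{mainSupermodular} is in hand; the only genuinely new ingredients are the choice of the witness set $S'$ of size $k_f$ and the sign bookkeeping in the substitution $f(S') \le f$. The mildest obstacle is conceptual rather than technical: one must check that the key inequality is invoked precisely on the iterations where the algorithm has not stopped (so that $S' \setminus S_{t-1} \neq \emptyset$ and weak-$\alpha$-supermodularity applies nontrivially), and that bounding $\phi_0$ by $f(S_0)$ is exactly what produces the claimed $f(S_0)$ in the numerator rather than $f(S_0) - f$.
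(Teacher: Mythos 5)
Your proof is correct and is essentially the paper's argument: the paper proves this theorem in one line by invoking Theorem~\ref{mainSupermodular} with $k = k_f$ and $E = f_{\operatorname{stop}} - f$ (the optimum over sets of size $\le k_f$ has value at most $f$, so the additive guarantee forces $f(S_t) \le f + (f_{\operatorname{stop}}-f) = f_{\operatorname{stop}}$ within the stated iteration count), and your write-up simply inlines that reduction. Your explicit checks --- non-emptiness of $S' \setminus S_{t-1}$ before stopping, and the bound $\phi_0 = f(S_0) - f \le f(S_0)$ --- are sound and in fact make the edge cases more careful than the paper's one-line citation.
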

\begin{proof}
The proof follows from Theorem \ref{mainSupermodular} by setting $k = k_f$ and $E = f_{\operatorname{stop}}-f$.
%and checking that the value of objective function for  $|S|= |S_0|+\left\lceil\alpha |S^*| \left(\ln \frac{f(S_0)}{E-E'} \right) \right\rceil$  is at most $f(S^*)+\varepsilon\le E$.
\end{proof}  

\section{Clustering}\label{clustering}
We will use the following auxiliary problem.
\begin{definition}[  $k$-Median]
We are given a set $X$ of data points, the set ${\cal C}$ of potential cluster center locations and the nonnegative costs $w_{ij}\ge 0$ for all $i,j\in X\times {\cal C}$. Find a set $S \subset {\cal C}$ minimizing $f(S)  = \sum_{i\in X} \min_{j \in {\cal C}} w_{ij}$ subject to $|S| \le k$.
\end{definition}
It is well known that the objective function $f(S)$ of the $k$-Median problem is supermodular and therefore weakly-$1$-supermodular by Lemma \ref{name}.
Our first application is a constrained version of the  $k$-means clustering problem.
\begin{definition}[Constrained $k$-Means]
Given a set of points $X \subset \R^d$, find a set $S \subset X$ minimizing $f(S)  = \sum_{x\in X} \min_{x' \in S} \|x- x'\|^2$ subject to $|S| \le k$.
\end{definition}
\begin{lemma}
Given a set of $n$ points $X$ define $S^*$ the optimal solution to the constrained $k$-means problem. 
Namely, $S^*$ minimizes $f(S)$ subject to $|S|\le k$.
One can find in $O(n^2dk\log(1/\eps))$ time a set $S$ of size $|S| =  O(k) + k \log (1/\eps)$ such that $f(S) \le (1+\eps)f(S^*)$.
\end{lemma}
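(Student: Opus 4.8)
The plan is to cast constrained $k$-means as an instance of the $k$-Median problem and then feed a constant-factor starting solution into Theorem~\ref{multiplicative}. Taking $\mathcal{C} = X$ and weights $w_{xx'} = \|x - x'\|^2$, the objective $f(S) = \sum_{x \in X}\min_{x' \in S}\|x-x'\|^2$ is exactly the $k$-Median objective, which is supermodular; hence by Lemma~\ref{name} it is weakly-$\alpha$-supermodular with $\alpha = 1$. Fixing $\alpha = 1$ (rather than some larger weak-supermodularity constant) is what will later pin the leading constant of the $k\log(1/\eps)$ term to $1$, so I would stress this reduction at the outset.

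Next I would obtain an initial solution $S_0$ with $|S_0| = O(k)$ and $f(S_0) \le \rho\, f(S^*)$ for an absolute constant $\rho$. Any constant-factor bicriteria clustering routine that returns $O(k)$ centers from $X$ will do; the natural choice, given the target running time, is adaptive ($D^2$) sampling of $O(k)$ data points, which produces such an $S_0$ in $O(nkd)$ time. A deterministic constant-factor $k$-median approximation returning exactly $k$ centers is an alternative, at the cost of a heavier running-time analysis. The two properties I actually need are that $\rho$ is constant and that $|S_0| = O(k)$.

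With $S_0$ fixed I would apply Theorem~\ref{multiplicative} with $\alpha = 1$. The greedy extension adds $\lceil k\ln(\rho/\eps)\rceil$ elements, so the final set has size $|S| \le |S_0| + \lceil k\ln(\rho/\eps)\rceil = O(k) + k\log(1/\eps)$, where $k\ln\rho$ is absorbed into the $O(k)$ term because $\rho$ is constant, and $f(S) \le (1+\eps) f(S^*)$ follows directly.

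It then remains to account for the running time, whose dominant contribution is the greedy oracle $\arg\min_i f(S_{t-1}\cup\{i\})$. Maintaining for every point $x$ its current squared distance $d(x) = \min_{x'\in S_{t-1}}\|x-x'\|^2$, I can evaluate $f(S_{t-1}\cup\{i\}) = \sum_x \min(d(x), \|x-x_i\|^2)$ in $O(nd)$ time per candidate $i$, hence $O(n^2 d)$ to scan all candidates and then refresh the $d(x)$ values; over the $O(k\log(1/\eps))$ iterations this gives $O(n^2 dk\log(1/\eps))$, which also dominates the $O(nkd)$ spent on $S_0$. I expect the main obstacle to be securing the starting solution within both the size and time budgets: the clean bound relies on $\rho = O(1)$ and $|S_0| = O(k)$, so a mere $O(\log k)$-approximation such as plain $k$-means++ would leave a stray $k\log\log k$ in the size and would fail to match the stated $O(k) + k\log(1/\eps)$. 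Everything else, namely the supermodularity reduction and the oracle bookkeeping, is routine.
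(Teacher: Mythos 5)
Your proposal is correct and follows essentially the same route as the paper: reduce constrained $k$-means to $k$-Median to get weak-$1$-supermodularity via Lemma~\ref{name}, seed with the $O(k)$-point constant-factor adaptive ($D^2$) sampling solution of Aggarwal, Deshpande, and Kannan (the paper's choice as well, noting it improves on plain $k$-means++, whose $O(\log k)$ factor you rightly flag as insufficient), and finish with the greedy extension theorem. Your explicit accounting of the $O(n^2dk\log(1/\eps))$ oracle cost is a detail the paper leaves implicit, but the argument is the same.
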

\begin{proof}
The constrained $k$-means objective function $f$ is weakly-$1$-supermodular because the problem is a special case of the $k$-Median problem defined above.
Using the the algorithm of \cite{AggarwalDK09} one obtains a set $S_0$ of size $|S_0| = O(k)$ points from the data for which $f(S_0) = O( f(S^*))$.
Their technique improves on the analysis of adaptive sampling method of \cite{ArthurV07}.
Greedily extending $S_0$ and applying the analysis of Theorem~\ref{mainSupermodular} completes the proof.
The quadratic dependency of the running time on the number of data points can be alleviated using the corset construction of \cite{FeldmanFSS2007,FeldmanL2011}
\end{proof}
The classical   $k$-means clustering problem  is defined as follows.
\begin{definition}[Unconstrained $k$-Means]
Given a set of  $n$ points $X \subset \R^d$, find a set $S \subset \R^d$ minimizing $f(S)  = \sum_{x\in X} \min_{c \in S} \|x- c\|^2$ subject to $|S| \le k$.
\end{definition}
\begin{lemma}\label{fast_kmeans}
Let $f(S^*)$ be the optimal solution to the unconstrained $k$-means problem.
One can find in time $O(n^2dk\log(1/\eps))$ a set $S \in \R^d$ of size $|S| =  O(k) + k \log (1/\eps)$ such that $f(S) \le (2+\eps)f(S^*)$.
\end{lemma}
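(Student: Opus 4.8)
The plan is to reduce the unconstrained problem to the constrained $k$-means result of the previous lemma, at the cost of a factor of $2$ that arises from restricting the centers to lie among the data points. The key fact I would establish is that the optimal cost achievable with data-point centers is at most twice the unconstrained optimum $f(S^*)$.

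First I would prove this factor-$2$ relationship. Let $C^*\subset\R^d$ be an optimal set of $k$ unconstrained centers, inducing a partition of $X$ into clusters $X_1,\dots,X_k$ whose optimal representatives are the centroids $\mu_j=\frac{1}{|X_j|}\sum_{x\in X_j}x$. For any $y\in\R^d$ the bias-variance identity gives
\begin{equation*}
\sum_{x\in X_j}\|x-y\|^2 = \sum_{x\in X_j}\|x-\mu_j\|^2 + |X_j|\cdot\|y-\mu_j\|^2 \ ,
\end{equation*}
the cross term vanishing because $\sum_{x\in X_j}(x-\mu_j)=0$. Taking $y$ to be a uniformly random point of $X_j$, the expectation of the second summand equals $\sum_{x\in X_j}\|x-\mu_j\|^2$, so the expected cost of such a random data center is exactly twice the centroid cost. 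Hence some $p_j\in X_j$ yields cluster cost at most $2\sum_{x\in X_j}\|x-\mu_j\|^2$, and summing over $j$ produces $k$ data points whose total cost is at most $2f(S^*)$. In particular, the optimal constrained solution $S^*_{\mathrm{con}}$ satisfies $f(S^*_{\mathrm{con}})\le 2f(S^*)$.

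Next I would apply the constrained $k$-means lemma with accuracy $\eps/2$ in place of $\eps$, obtaining in time $O(n^2dk\log(1/\eps))$ a set $S$ of data points with $|S|=O(k)+k\log(1/\eps)$ and $f(S)\le(1+\eps/2)f(S^*_{\mathrm{con}})$. Chaining the two inequalities gives
\begin{equation*}
f(S)\le(1+\eps/2)\,f(S^*_{\mathrm{con}})\le(1+\eps/2)\cdot 2f(S^*)=(2+\eps)f(S^*)\ ,
\end{equation*}
the extra $\log 2$ from rescaling $\eps$ being absorbed into the $O(k)$ additive term and the running time unchanged.

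The only substantive step is the factor-$2$ bound, which is the classical observation that restricting $k$-means centers to the input points loses at most a factor of two; once it is in hand, the reduction to the constrained lemma is immediate. The one point requiring care is the $\eps/2$ rescaling, so that the final guarantee reads $(2+\eps)f(S^*)$ rather than $(2+2\eps)f(S^*)$.
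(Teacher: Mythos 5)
Your proposal is correct and follows essentially the same route as the paper: reduce to the constrained lemma run with accuracy $\eps/2$, using the fact that the constrained optimum is at most twice the unconstrained one. The only difference is that the paper simply cites \cite{ArthurV07} for that factor-$2$ fact, whereas you supply its standard proof via the bias--variance identity and a random data-point center, which is a fine (and self-contained) substitute.
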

\begin{proof}
The proof and the algorithm are identical to the above. 
The only point to note is that a $1+\eps/2$ approximation to the constrained problem is at most a $2+\eps$ approximation to the unconstrained one.
See \cite{ArthurV07}, for example, for the argument that the minimum of the constrained objective is at most twice that of the unconstrained one.
\end{proof}
Alternatively, we can utilize a more computationally expensive approach.
It is known that given an instance $(X,k)$ of the Unconstrained $k$-Means problem one can construct in polynomial time an instance of the $k$-Median problem $(X,{\cal C},w,k)$ where ${\cal C}\subseteq   \R^d$ such that for any solution of value $\Phi$ for the Unconstrained $k$-Means problem  there exists a solution of value $(1+\varepsilon)\Phi$ for the corresponding instance of the  $k$-Median problem (see Theorem 7 \cite{MMSW}). Moreover, $|{\cal C}|= n^{O(\log(1/\varepsilon)/\varepsilon^2)}$. Therefore, after applying this transformation on our instance of the Unconstrained $k$-Means and using the same initial solution $S_0$ as in Lemma \ref{fast_kmeans} we derive.
\begin{lemma}\label{slow_kmeans}
Let $f(S^*)$ be the optimal solution to the unconstrained $k$-means problem.
One can find in time $O(n^{O(\log(1/\varepsilon)/\varepsilon^2)}dk)$ a set $S \in \R^d$ of size $|S| =  O(k) + k \log (1/\eps)$ such that $f(S) \le (1+\eps)f(S^*)$.
\end{lemma}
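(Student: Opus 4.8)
The plan is to reduce the Unconstrained $k$-Means instance to a $k$-Median instance on a finite candidate center set, and then run the greedy extension machinery of Section~\ref{main} exactly as in Lemma~\ref{fast_kmeans}, but with this discretized instance playing the role of the constrained problem. The whole difficulty is bookkeeping, not new ideas: every error that appears is either the discretization error of \cite{MMSW} or the greedy error of Theorem~\ref{multiplicative}, and these compose multiplicatively.

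First I would apply Theorem~7 of \cite{MMSW} to the input $(X,k)$, producing a $k$-Median instance $(X,\mathcal{C},w,k)$ with $\mathcal{C}\subseteq\R^d$ and $|\mathcal{C}|=n^{O(\log(1/\eps)/\eps^2)}$, whose weights are chosen to approximate the squared distances, $\|x_i-c_j\|^2 \le w_{ij}\le (1+\eps)\|x_i-c_j\|^2$. Write $g$ for this $k$-Median objective and $g^\star$ for its optimum over size-$k$ subsets of $\mathcal{C}$. Three facts follow. The lower bound on the weights gives $f(S)\le g(S)$ for every $S\subseteq\mathcal{C}$, so any good $k$-Median set is automatically a good $k$-means set. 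The existence guarantee of \cite{MMSW} gives $g^\star\le(1+\eps)f(S^*)$, and since $f(S^*)$ is the minimum over all of $\R^d$ we also have $g^\star\ge f(S^*)$; hence $g^\star$ and $f(S^*)$ agree up to a constant factor. These inequalities are what let me move freely between the discretized $k$-Median value and the true $k$-means value.

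Next, $g$ is the objective of a $k$-Median instance, hence non-increasing and supermodular, so it is weakly-$1$-supermodular by Lemma~\ref{name} and I may take $\alpha=1$. For the initial solution I reuse the $S_0$ from Lemma~\ref{fast_kmeans} (the algorithm of \cite{AggarwalDK09}), which has $|S_0|=O(k)$ and $k$-means cost $O(f(S^*))$; snapping its $O(k)$ centers to their nearest candidates in $\mathcal{C}$ inflates the cost by at most $(1+\eps)$ and yields a valid start with $g(S_0)=O(f(S^*))=O(g^\star)$, i.e. $\rho=O(1)$. Feeding $g$, $S_0$, $\alpha=1$ and $\rho=O(1)$ into Theorem~\ref{multiplicative}, run over the candidate set $\mathcal{C}$, produces $S\subseteq\mathcal{C}$ with $g(S)\le(1+\eps)g^\star$ and $|S|\le |S_0|+\lceil k\ln(\rho/\eps)\rceil = O(k)+k\log(1/\eps)$. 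Chaining the sandwich inequalities gives $f(S)\le g(S)\le(1+\eps)g^\star\le(1+\eps)^2 f(S^*)$, and rescaling $\eps$ by a constant turns $(1+\eps)^2$ into the claimed $(1+\eps)$.

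Finally I would account for the running time. Each greedy step evaluates $g(S_{t-1}\cup\{i\})$ over all $i\in\mathcal{C}$; keeping for every data point its current nearest chosen center makes one such evaluation cost $O(nd)$, so a single step costs $|\mathcal{C}|\cdot O(nd)=n^{O(\log(1/\eps)/\eps^2)}\,O(nd)$. Multiplying by the $O(k\log(1/\eps))$ iterations and absorbing the stray $n$ and $\log(1/\eps)$ factors into the exponent gives the stated $O(n^{O(\log(1/\eps)/\eps^2)}dk)$ bound. The main obstacle is exactly the point I flagged at the outset: making the two independent $\eps$'s line up, and verifying that $\mathcal{C}$ simultaneously represents the start $S_0$ (via snapping) and makes $g(S)$ a legitimate upper bound on the genuine cost $f(S)$. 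Once both directions of the weight approximation are in hand, the errors simply multiply and the result drops out of Theorem~\ref{multiplicative}.
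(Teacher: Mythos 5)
Your proposal is correct and takes essentially the same route as the paper: the paper's (very terse) proof consists precisely of applying the $k$-Median discretization of Theorem~7 of \cite{MMSW} to get a weakly-$1$-supermodular instance over the candidate set $\mathcal{C}$ with $|\mathcal{C}|=n^{O(\log(1/\eps)/\eps^2)}$, reusing the initial solution $S_0$ of Lemma~\ref{fast_kmeans}, and invoking the greedy extension analysis. You simply spell out details the paper leaves implicit (the two-sided sandwich between the $k$-Median objective $g$ and $f$, snapping $S_0$ into $\mathcal{C}$, rescaling $\eps$, and the running-time accounting), all consistent with the paper's intent.
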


\section{Sparse Multiple Linear Regression}\label{smlr}

We begin by defining the Sparse Multiple Linear Regression (SMLR) problem.
Given two matrices $X \in \R^{m \times n}$ and $Y \in \R^{m \times \ell}$, and an integer $k$
find a matrix $W \in \R^{n \times \ell}$ that minimizes $\|XW - Y\|_F^2$ subject to $W$ having at most $k$ non zero rows. 
We assume for notational brevity (and w.l.o.g.) that the columns of $X$ have unit norm. 
An alternative and equivalent formulation of SMLR is as follows.
Let $X_S$ be a submatrix of the matrix $X$ defined by the columns of $X$ indexed by the set $S\subseteq \{1,\dots,n\}$.
Let $X_S^\inv$ be the Moore-Penrose pseudo-inverse of the matrix $X_S$.
It is well-known (and easy to verify) that the minimizer of $\|XW - Y\|_F^2$ subject to $W$ whose non zero rows are indexed by $S$ is equal to $\|Y-X_SX_S^\inv Y\|^2_F$.
SMLR can therefore be reformulated as 
$$\min_{S\subseteq [n]} \{f(S) = \|Y-X_SX_S^\inv Y\|^2_F: |S|\le k\} \ .$$
We can consequently apply our methodology from Section \ref{main} to SMLR if we show that $f(S)$ is $\alpha$-weakly-supermodular.

\begin{lemma}
For $X \in \R^{m \times n}$ and $Y \in \R^{m \times \ell}$ the 
SMLR minimization function $f(S) = \|Y-X_SX_S^\inv Y\|^2_F$ is $\alpha$-weakly-supermodular with 
$\alpha = \max_{S'} \|X_{S'}^\inv \|^2_2$.
\end{lemma}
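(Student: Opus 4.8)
The plan is to recast the objective in terms of orthogonal projections and then reduce the weak-supermodularity inequality to a single operator-norm bound on the ``gain subspace.'' Write $P_S := X_S X_S^{\inv}$ for the orthogonal projector onto the column span of $X_S$, so that $f(S) = \|(I-P_S)Y\|_F^2$. Since $S \subseteq S\cup T$ forces $P_{S\cup T}P_S = P_S$, the operators $I - P_{S\cup T}$ and $P_{S\cup T}-P_S$ are orthogonal projectors onto orthogonal subspaces, and the Pythagorean identity gives
\[
f(S)-f(S\cup T) = \|(P_{S\cup T}-P_S)Y\|_F^2 = \|(P_{S\cup T}-P_S)R\|_F^2, \qquad R := (I-P_S)Y .
\]
The crucial observation is that $P_{S\cup T}-P_S$ is exactly the orthogonal projector onto $V := \mathrm{span}\{\tilde x_i : i \in T\setminus S\}$, where $\tilde x_i := (I-P_S)x_i$ is the component of the $i$-th column orthogonal to $\mathrm{col}(X_S)$. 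Writing $\tilde X := \tilde X_{T\setminus S}$ for the matrix with these columns, that projector is $\tilde X \tilde X^{\inv}$; the same reasoning applied to a single element gives $f(S)-f(S\cup\{i\}) = \|\tilde x_i^\T R\|_2^2/\|\tilde x_i\|_2^2 =: \delta_i$.

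Next I would bound $\|\tilde X\tilde X^{\inv}R\|_F^2$ from above in terms of the $\delta_i$. Setting $B := \tilde X^\T R$ (whose $i$-th row is $\tilde x_i^\T R$) and $G := \tilde X^\T \tilde X$, and using the identity $\tilde X^{\inv} = G^{\inv}\tilde X^\T$, a short computation gives $\|\tilde X\tilde X^{\inv}R\|_F^2 = \mathrm{tr}(B^\T G^{\inv} B) \le \|G^{\inv}\|_2\,\|B\|_F^2 = \|\tilde X^{\inv}\|_2^2\sum_{i\in T\setminus S}\|\tilde x_i^\T R\|_2^2$, where the inequality is the standard $\mathrm{tr}(B^\T M B)\le\|M\|_2\,\mathrm{tr}(B^\T B)$ for PSD $M$. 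The unit-norm hypothesis on the columns of $X$ now enters: since $\|\tilde x_i\|_2 \le \|x_i\|_2 = 1$, we have $\|\tilde x_i^\T R\|_2^2 \le \delta_i$, so the right-hand side is at most $\|\tilde X^{\inv}\|_2^2 \sum_i \delta_i \le \|\tilde X^{\inv}\|_2^2\,|T\setminus S|\max_{i}\delta_i$. Comparing with the target inequality, it remains only to show $\|\tilde X^{\inv}\|_2^2 \le \alpha = \max_{S'}\|X_{S'}^{\inv}\|_2^2$.

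This final reduction is where I expect the real work to lie. I would prove the equivalent singular-value statement $\sigma_{\min}(\tilde X) \ge \sigma_{\min}(X_{S\cup T})$, whence $\|\tilde X^{\inv}\|_2 \le \|X_{S\cup T}^{\inv}\|_2 \le \sqrt{\alpha}$. The argument: for any unit vector $c$ indexed by $T\setminus S$, one has $\|\tilde X c\|_2 = \|(I-P_S)X_{T\setminus S}c\|_2 = \min_a\|X_{T\setminus S}c + X_S a\|_2$; taking the minimizing $a^*$ and forming the stacked vector $v=(a^*;c)$ yields $\|X_{S\cup T}v\|_2 = \|\tilde X c\|_2$ with $\|v\|_2^2 = \|a^*\|_2^2 + 1 \ge 1$, so $\sigma_{\min}(X_{S\cup T}) \le \|X_{S\cup T}v\|_2/\|v\|_2 \le \|\tilde X c\|_2$, and minimizing over $c$ gives the claim. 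The one point demanding care is rank deficiency, where $\sigma_{\min}$ and $\|\cdot^{\inv}\|_2$ must be read as the smallest nonzero singular value and its reciprocal; I would handle this by restricting the Rayleigh quotient to the orthogonal complement of the relevant null spaces, noting that $V \subseteq \mathrm{col}(\tilde X)$ keeps all the pseudo-inverse identities used in the trace bound valid. Combining the three displays then gives $f(S)-f(S\cup T) \le \alpha\,|T\setminus S|\max_{i\in T\setminus S}\delta_i$, which is precisely weak-$\alpha$-supermodularity.
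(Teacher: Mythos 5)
Your proof is correct and takes essentially the same route as the paper's: you express the gain $f(S)-f(S\cup T)$ as the squared Frobenius norm of the projection of $Y$ onto the span of the columns $x_i$, $i\in T\setminus S$, projected away from $\mathrm{col}(X_S)$, bound it by $\|\tilde X^\inv\|_2^2\sum_i \|\tilde x_i^\T R\|_2^2$, and close with the same key comparison $\sigma_{\min}(\tilde X)\ge\sigma_{\min}(X_{S\cup T})$ proved by the identical stacked-vector trick ($\tilde X c = X_{S\cup T}v$ with $\|v\|\ge \|c\|$), which is exactly the paper's $\|Z_{T\setminus S}^\inv\|_2\le \|X_{T\cup S}^\inv\|_2$ step. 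The differences are cosmetic --- the paper normalizes the projected columns and spends the unit-norm hypothesis there (via $\zeta_i\le 1$) where you spend it in $\|\tilde x_i^\T R\|_2^2\le \delta_i$, and it uses an SVD identity where you use a Gram-matrix trace bound --- except that you explicitly flag the rank-deficient case, which the paper glosses over; your sketched repair does go through, since any $(b_0;c_0)\in\ker(X_{S\cup T})$ satisfies $X_{T\setminus S}c_0\in\mathrm{col}(X_S)$ and hence $c_0\in\ker(\tilde X)$.
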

\begin{proof}
We first estimate $f(S)-f(S\cup T)$. Denote by $Z_{T\setminus S}$ the matrix whose columns are those of $X_{T\setminus S}$ projected away from the span $X_S$ and normalized.
More formally, $\zeta_i = \|(I - X_{S} X_{S}^\inv)x_i\|$ and $z_i = (I - X_{S} X_{S}^\inv)x_i/\zeta_i$ for all $i\in T\setminus S$. 
Note that the column span of $Z_{T \setminus S}$ is orthogonal to that of $X_S$ and that together they are equal to the column span of $X_{T \cup S}$.
Using the Pythagorean theorem we obtain $f(S) = \|Y\|_F^2 - \|X_{S}X_S^\inv Y\|_F^2$ and $f(S \cup T) = \|Y\|_F^2 - \|X_{S}X_S^\inv Y\|_F^2 - \|Z_{S\setminus T}Z_{S\setminus T}^\inv Y\|_F^2$.
Substituting $T= \{i\}$ also gives $f(S)-f(S\cup \{i\})= \|z_{i} z_{i}^\T Y  \|_F^2$.
\begin{align}\label{newAlphaNatarajan}
f(S)-f(S\cup T) &= \|Z_{T\setminus S}Z_{T\setminus S}^\inv  Y  \|_F^2 \\
& = \|(Z^{\T}_{T\setminus S})^\inv \cdot  Z^\T_{T\setminus S} Y  \|_F^2  & \mbox{ by Singular Value Decomposition} \\
&\le \|(Z^{\T}_{T\setminus S})^\inv  \|_2^2 \cdot \|Z^\T_{T\setminus S} Y  \|_F^2  \\
&= \|Z_{T\setminus S}^\inv  \|_2^2 \cdot \sum_{i \in T\setminus S} \|z_i^\T Y  \|_2^2   \\
&\le \|X_{T\cup S}^\inv  \|^2_2 \cdot |T\setminus S| \max_{i \in T\setminus S} \|z^\T_i Y  \|_2^2 & \mbox{ see below}\label{switchZX}  \\ 
&\le \alpha \cdot |T\setminus S|  \left[ f(S)-f(S\cup \{i\}) \right]   & %\mbox{ by Cauchy Interlacing Theorem}
\end{align}
For Equation~\eqref{switchZX} we use a non trivial transition, $\|Z_{T\setminus S}^\inv  \|_2 \le \|X_{T\cup S}^\inv  \|_2$.
%
%To prove the statement of theorem we just need to prove that $\|\Sigma^{-1}\|_2=\|Z_{T\setminus S}^\inv \|_2\le \|X^\inv \|_2=\sqrt{\alpha}$.
%
By the definition of $ Z_{T\setminus S}$ we can write for $i \in T\setminus S$ that $z_i=(x_i-\sum_{j\in S}\alpha_{ij} x_j)/\zeta_i$ and $\zeta_i = \|(I - X_{S} X_{S}^\inv )x_i\|$.
For any vector $w \in \R^{|T\setminus S|}$
$$  Z_{T\setminus S}w = \sum_{i \in T\setminus S} x_i w_i/\zeta_i  + \sum_{j \in S}  x_j \sum_{i \in T \setminus S} w_i\alpha_{ij}/\zeta_i = X_{T \cup S}w'$$
where $w'_i = w_i/\zeta_i$ for $i \in T\setminus S$ and $w'_j =\sum_{i \in T \setminus S} w_i\alpha_{ij}/\zeta_i$ for $j \in S$.  
Since, $\zeta_i = \|(I - X_{S} X_{S}^\inv )x_i\| \le \|x_i\| = 1$ we have $\|w'\| \ge \|w\|$. 
Finally, consider $w$ such that $\|w\|=1$ and $\|Z_{T\setminus S}w\| = \|Z_{T\setminus S}^\inv \|^{-1}$. 
This is the right singular vector corresponding to the smallest singular value of $Z_{T\setminus S}$. We obtain
\[
\|Z_{T\setminus S}^\inv \|^{-1} = \|Z_{T\setminus S}w\| = \|X_{T \cup S}w'\| \ge \|X_{T \cup S}^\inv \|^{-1}\|w'\| \ge \|X_{T \cup S}^\inv \|^{-1} \ .
\]
Which completes the proof.
%To complete the proof note that $\|X_{T \cup S}^\inv \|^{-1} \le \|X_{\operatorname{rank}(X)}^\inv \|^{-1}$  for $|T \cup S| \le \operatorname{rank}(X)$ by Cauchy's interlacing theorem.
%The solution set cannot have cardinality more $\operatorname{rank}(X)$. 
%The greedy algorithm add a new column $i$ only if it is linearly independent of the columns of $S$. Otherwise, $f(S) - f(S \cup \{i\}) = 0$.
%Therefore, once $\operatorname{rank}(X)$ columns are added, $X_S$ spans the entire column span of $X$ and $\forall \;i \;\; f(S) - f(S \cup \{i\}) = 0$.
\end{proof}
\begin{lemma}\label{slow_kmeans}
Let $f(S^*)$ be the optimal solution to the Sparse Multiple Linear Regression problem.
One can find in time $O(\alpha k\log (\|Y\|^2_F/\eps)\cdot nT_f)$ a set $S\subseteq  [n]$ of size $|S| =  \lceil \alpha k \log (\|Y\|^2_F/\eps) \rceil$ such that $f(S) \le (1+\eps)f(S^*)$ where $T_f$ is the time needed to compute $f(S)$ once.
\end{lemma}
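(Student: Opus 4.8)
The plan is to instantiate the greedy machinery of Section~\ref{main} directly on the SMLR objective, using the empty set as the starting solution. First I would invoke the preceding lemma, which certifies that $f(S)=\|Y-X_SX_S^\inv Y\|_F^2$ is weakly-$\alpha$-supermodular with $\alpha=\max_{S'}\|X_{S'}^\inv\|_2^2$, so that Algorithm~\ref{greedy} and Theorem~\ref{mainSupermodular} apply verbatim. Taking $S_0=\emptyset$, the same Pythagorean identity used in that lemma gives $f(\emptyset)=\|Y\|_F^2$, since projecting onto the empty column set contributes nothing; this identifies $\|Y\|_F^2$ as the quantity $f(S_0)$ appearing in the iteration count of Theorem~\ref{mainSupermodular}.

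Next I would run Algorithm~\ref{greedy} for $\tau=\lceil \alpha k\log(\|Y\|_F^2/\eps)\rceil$ steps. Since each step appends a single index, the output satisfies $|S|=\tau$, yielding the claimed cardinality. For the approximation guarantee I would feed Theorem~\ref{mainSupermodular}, equivalently Theorem~\ref{multiplicative} with $S_0=\emptyset$ read as a $\rho$-approximation for $\rho=\|Y\|_F^2/f(S^*)$. Setting the additive tolerance to $E=\eps f(S^*)$ converts the bound $f(S_\tau)\le f(S^*)+E$ into $f(S_\tau)\le(1+\eps)f(S^*)$, while the number of iterations it prescribes, $\lceil \alpha k\log(f(\emptyset)/E)\rceil=\lceil \alpha k\log(\|Y\|_F^2/(\eps f(S^*)))\rceil$, is bounded by the stated $\lceil \alpha k\log(\|Y\|_F^2/\eps)\rceil$.

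For the running time I would simply account for the cost of one greedy step. Evaluating $\arg\min_{i\in[n]} f(S_{t-1}\cup\{i\})$ amounts to computing $f$ on $n$ candidate sets, each at cost $T_f$, for a per-step cost of $O(nT_f)$; multiplying by the $\tau=O(\alpha k\log(\|Y\|_F^2/\eps))$ steps gives the claimed $O(\alpha k\log(\|Y\|_F^2/\eps)\cdot nT_f)$ total.

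The only genuinely delicate point is the passage from the additive guarantee of Theorem~\ref{mainSupermodular} to the multiplicative one: the theorem controls an additive error $E$, and recovering a $(1+\eps)$ factor forces the choice $E=\eps f(S^*)$ even though $f(S^*)$ is not known to the algorithm. This is exactly the role of Theorem~\ref{multiplicative}, which reads $S_0=\emptyset$ as a $\rho$-approximation and makes the conversion automatic, so no knowledge of $f(S^*)$ is actually required at runtime. The one remaining subtlety is that matching the iteration count to the clean form $\log(\|Y\|_F^2/\eps)$ uses $f(S^*)\ge 1$ (or, equivalently, an absorbing of the factor $f(S^*)$ into the tolerance by taking $E=\eps$ and reporting the additive bound $f(S_\tau)\le f(S^*)+\eps$); I would state this normalization explicitly rather than leave it implicit.
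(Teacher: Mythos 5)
Your proposal is correct and matches the paper's intended argument: the paper states this lemma without proof, and the evident route is exactly yours---run Algorithm~\ref{greedy} with $S_0=\emptyset$, $f(\emptyset)=\|Y\|_F^2$, and tolerance $E=\eps$, paying $O(nT_f)$ per iteration. Your closing observation is also well taken: converting the additive guarantee $f(S_\tau)\le f(S^*)+\eps$ into the stated multiplicative bound with iteration count $\lceil \alpha k \log(\|Y\|_F^2/\eps)\rceil$ (rather than $\lceil \alpha k \log(\|Y\|_F^2/(\eps f(S^*)))\rceil$) does implicitly require a normalization such as $f(S^*)\ge 1$, which the paper leaves unstated.
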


\section{Sparse Regression}
The problem of Sparse Regression defined in \cite{Natarajan1995} is an instance of SMLR where the number of columns in $Y$ is $\ell=1$.
Since both $Y$ and $W$ are vectors we reduce the more familiar form of this problem; minimize $\|Xw - y\|_2^2$ subject to $\|w\|_{0} \le k$.

\cite{Natarajan1995} analyzed the greedy algorithm for the sparse regression problem.
He sets a desired threshold error $E$ and defined $k$ to be the minimum cardinality of a solution $S^*$ that achieves $f(S^*) \le E' = E/4$.  
He showed that the greedy algorithm finds a solution $S$ such that $f(S)\le E$ such that 
$$|S| \le \left\lceil 9 k\cdot \|X^\inv \|_2^2 \ln \frac{\|y\|^2_2}{E}\right\rceil$$
In his work \cite{Natarajan1995} implicitly assumes the over constrained setting 
where the number of columns $m$ in $X$ is smaller than their dimension $n$ and that $X$ is full rank. 
In this setting $\alpha = \max_{S'} \|X_{S'}^\inv \|  = \|X^\inv \|$ by Cauchy's interlacing theorem.

Here, we apply Theorem \ref{modification} with initial solution $S_0=\emptyset$ (which gives $f(S_0)=\|y\|_2^2$) and $E'=E/4$.
It immediately yields that the greedy algorithm finds a solution of value $f(S)\le E$ such that 
$$|S|\le   \left\lceil k\cdot \|X^\inv \|_2^2 \ln \frac{\|y\|_2^2}{E - E/4}\right\rceil \le   \left\lceil k\cdot \|X^\inv \|_2^2 \left( \ln \frac{\|y\|_2^2}{E}  + \ln \frac{4}{3} \right) \right\rceil $$
%We used the inequality $\ln \frac{t-1/4}{3/4} \le \frac43 \ln t$ for all $t\ge 1$.
This improves the result of \cite{Natarajan1995} in three ways 
1) the approximation factor is smaller by a constant factor 
2) its proof is more streamlined and 
3) it is extended to viability of the greedy algorithm to the under constrained case where the result of \cite{Natarajan1995} does not hold. 
Specifically, where his implicit assumption that $\max_{S'} \|X_{S'}^\inv \|  = \|X^\inv \|$ no longer holds.

%%%%%%%%%%%%%%%%%%%%%%%%%%%%%%%%%%%%%%%%%%%%%%
\section{Column Subset Selection Problem}
Given a matrix $X$, Column Subset Selection (CSS) is concerned with finding a small set of columns whose span captures as much of the Frobenius norm of $X$.
It was throughly investigated in the context of numerical linear algebra \cite{SSP1, SSP2, SSP3}.
In other words, find a subset $S \in [n]$, $|S|\le k$ of matrix columns the minimize $f(S) = \|X- X_SX_S^\inv X\|_F^2$.
This formulation makes it clear that this is a special case of SMLR where $Y = X$.

\cite{SviridenkoW13} investigated notion of a curvature $c\in [0,1]$ for a nonincreasing set functions. They define it as follows:
%Let $A$ be an $m \times n$ real matrix.  The (squared) Frobenius norm of $A$ is defined as $\|A\|^{2}_F = \sum_{i,j} a_{ij}^2. $ The condition number of matrix $A$ is defined as 
%$$\kappa(A)=\frac{\max_{x\in R^n}  \|Ax\|_2  }{\min_{x\in R^n}  \|Ax\|_2 }.$$
%
%Given a set $S\subseteq [n]$ of column indices then $A_SA_S^\inv A$ defines a matrix where each column is a projection of corresponding column in $A$ onto subspace spanned by the columns of $A$   indexed by the set $S$. We define a non-increasing non-negative set function 
%\begin{equation}\label{set_opt}
%f(S)=\|A-A_SA_S^\inv A\|^{2}_F.
%\end{equation}
%The Column Subset Selection Problem is defined as follows
%$$\min \{f(S): S\subseteq [n], |S|\le k\}.$$
\begin{equation}
\label{eq:gen-curvature}
c = 1 - \min_{j \in [n]} \min_{S,T \subseteq [n] \setminus \{j\}}\frac{f(S)-f(S\cup\{j\})}{f(T)-f(T\cup\{j\})}.
\end{equation}
They show that there exists a greedy type algorithm that finds a solution of value at most $1/(1-c)$ times the optimal 
value of the minimization problem for any objective set function with curvature $c$ (Corollary 8.5 in \cite{SviridenkoW13}).
\begin{lemma}[Lemma 9.1 from \cite{SviridenkoW13}]\label{knownFacts}
Let $f(S)$ be the objective function for the Column Subset Selection Problem corresponding to the matrix $X$. 
The curvature $c$ of $f(S)$ is such that $\frac{1}{1-c}\le \kappa^2(X)$ 
where $\kappa(X)$ is the condition number of $X$.
\end{lemma}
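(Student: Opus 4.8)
The plan is to reduce the curvature $c$ to a ratio of single-element marginal gains, recognize each marginal gain as a Rayleigh quotient of $XX^\T$ evaluated at a unit vector that lies in the column space of $X$, and then apply the Courant--Fischer bounds.

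First I would rewrite the curvature in a form adapted to the quantity we must bound. Taking reciprocals in the definition \eqref{eq:gen-curvature} and pushing the minima/maxima through the ratio gives
$$\frac{1}{1-c} = \max_{j\in[n]}\ \max_{S,T\subseteq[n]\setminus\{j\}} \frac{f(T)-f(T\cup\{j\})}{f(S)-f(S\cup\{j\})},$$
so it suffices to bound every individual ratio of marginal gains by $\kappa^2(X)$. Next I would import the marginal-gain identity already established in the SMLR lemma, specialized to the Column Subset Selection case $Y=X$. There it is shown that $f(S)-f(S\cup\{j\}) = \|z_j z_j^\T X\|_F^2$, where $z_j = (I-X_SX_S^\inv)x_j/\zeta_j$ with $\zeta_j = \|(I-X_SX_S^\inv)x_j\|$ is the unit vector obtained by projecting the column $x_j$ off $\mathrm{span}(X_S)$ and normalizing. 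Because $z_j$ has unit norm, the rank-one matrix $z_jz_j^\T X$ satisfies $\|z_j z_j^\T X\|_F^2 = \|z_j^\T X\|_2^2 = z_j^\T XX^\T z_j$.

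The crucial structural observation, which I would isolate as a separate step, is that every such $z_j$ lies in the column space of $X$: it is a normalized linear combination of $x_j$ and the columns of $X_S$, all of which are columns of $X$. Hence $z_j$ is a unit vector in $\mathrm{col}(X)$, and for any such vector the Rayleigh quotient obeys $\sigma_{\min}^2(X)\le z_j^\T XX^\T z_j\le\sigma_{\max}^2(X)$, where $\sigma_{\min}$ and $\sigma_{\max}$ are the extreme singular values of $X$. Applying the upper bound to the numerator's vector and the lower bound to the denominator's vector yields, for every choice of $j,S,T$,
$$\frac{f(T)-f(T\cup\{j\})}{f(S)-f(S\cup\{j\})} \le \frac{\sigma_{\max}^2(X)}{\sigma_{\min}^2(X)} = \kappa^2(X),$$
and taking the maximum over $j,S,T$ gives the claim $\tfrac{1}{1-c}\le\kappa^2(X)$.

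The one point requiring care --- and the main obstacle --- is the well-definedness of the denominator: the ratio is meaningful only when $f(S)-f(S\cup\{j\})>0$, i.e.\ when $x_j\notin\mathrm{span}(X_S)$ so that $\zeta_j\neq 0$ and $z_j$ actually exists. This is precisely where the hypothesis implicit in $\kappa(X)<\infty$ enters. If $X$ has full column rank then no column $x_j$ lies in the span of any subset of the remaining columns, so every marginal gain is strictly positive and $\sigma_{\min}(X)>0$; the bound above is then vacuously well-posed. I would state this full-rank assumption explicitly and note that it is exactly the regime (finite condition number, $\max_{S'}\|X_{S'}^\inv\|=\|X^\inv\|$) already invoked in the Sparse Regression section.
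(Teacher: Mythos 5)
Your proof is correct, but note that the paper itself never proves this statement: it is imported verbatim as Lemma~9.1 of \cite{SviridenkoW13}, so there is no internal proof to compare against, and your blind derivation serves as a legitimate self-contained substitute. Moreover, it is assembled from the paper's own ingredients: the marginal-gain identity $f(S)-f(S\cup\{j\})=\|z_jz_j^\T X\|_F^2$ is exactly the SMLR computation specialized to $Y=X$, and since $\|z_j\|=1$ this correctly reduces each gain to the Rayleigh quotient $z_j^\T XX^\T z_j$. The step you rightly isolate --- that $z_j$ lies in $\mathrm{col}(X)$, being a normalized combination of $x_j$ and the columns of $X_S$ --- is indeed the crux: for $m>n$ the matrix $XX^\T$ is singular, so the lower bound $z_j^\T XX^\T z_j\ge\sigma_{\min}^2(X)$ would fail for arbitrary unit vectors, and your restriction to the column space is what saves it. This is the single-element analogue of the nontrivial transition $\|Z_{T\setminus S}^\inv\|_2\le\|X_{T\cup S}^\inv\|_2$ in the paper's SMLR lemma, though your version is simpler because only one residual direction is involved. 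Your handling of degeneracy is also sound: full column rank forces $\zeta_j>0$ and strictly positive denominators, which is exactly the regime in which $\kappa(X)$ is finite, and it matches the implicit assumption the paper makes in the Sparse Regression section. A small dividend of your route: it nowhere uses the column-normalization assumption that the surrounding CSS discussion carries, so your argument establishes $\frac{1}{1-c}\le\kappa^2(X)$ for any full-column-rank $X$ (and would extend to rank-deficient $X$ with $\kappa$ read as the ratio of extreme \emph{nonzero} singular values, provided the curvature is only formed over sets with nonvanishing marginal gains).
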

Note that for any matrix $X$with full column rank if $\tilde{X}$ is the matrix with normalized columns then $\|\tilde{X}^+\|\le \kappa(X)$.
We can find our initial solution $S_0$ by one of the three known methods:
\begin{enumerate}
\item an approximation  algorithm from \cite{SviridenkoW13} finds a solution $S_0$ such that $|S_0|=k$ and performance guarantee $\rho=\kappa^2(X)$;
\item an approximation  algorithm from \cite{apx1,apx2} with $|S_0|=k$ and  $\rho=k+1$;
\item an approximation  algorithm from \cite{CB} with $|S_0|=2k$ and  $\rho=2$;
\end{enumerate}
\begin{lemma}
For the columns subset selection problem for a column normalized matrix $X$ and $\alpha = \max_{S'} \|X_{S'}^\inv \|^2_2$ one can fine a set $S$ of value $f(S)\le (1+\delta)f(S^*)$ such that 
$$|S| = O\left(\alpha k \left( \ln \frac{\rho }{\delta} \right) \right).$$
\end{lemma}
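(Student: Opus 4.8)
The plan is to recognize that the Column Subset Selection objective $f(S) = \|X - X_S X_S^\inv X\|_F^2$ is exactly the SMLR objective specialized to $Y = X$. First I would invoke the SMLR lemma proved earlier in Section~\ref{smlr}, which establishes that this objective is weakly-$\alpha$-supermodular with $\alpha = \max_{S'}\|X_{S'}^\inv\|_2^2$. Since $X$ is column normalized by hypothesis, this is precisely the parameter appearing in the statement, so no additional structural work is required to supply the property that the greedy machinery consumes.

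With weak-$\alpha$-supermodularity in hand, I would obtain an initial solution $S_0$ from one of the three cited approximation algorithms, each of which guarantees $f(S_0) \le \rho f(S^*)$ for its respective $\rho$ and produces a set of size $|S_0| = O(k)$. I would then feed $S_0$ into the Greedy Extension Algorithm and apply Theorem~\ref{multiplicative} with $\eps = \delta$. This immediately yields a set $S \supseteq S_0$ with $f(S) \le (1+\delta)f(S^*)$ and $|S| \le |S_0| + \lceil \alpha k \ln(\rho/\delta)\rceil$.

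The remaining step is purely a simplification of the cardinality bound. Because all three initial solutions satisfy $|S_0| = O(k)$, because $\alpha \ge 1$, and because $\rho/\delta$ is bounded away from $1$ so that $\ln(\rho/\delta) = \Omega(1)$, the additive $|S_0|$ term is absorbed into $O(\alpha k \ln(\rho/\delta))$, giving the claimed bound $|S| = O(\alpha k \ln(\rho/\delta))$.

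I do not expect a genuine obstacle here, as the result is essentially a corollary obtained by composing the SMLR weak-supermodularity lemma with the generic greedy-extension guarantee of Theorem~\ref{multiplicative}. The only points deserving care are confirming that the $\alpha$ governing the greedy analysis and the $\alpha$ in the statement coincide for the normalized matrix, and --- should one wish to phrase the bound through the condition number --- using the earlier observation that $\max_{S'}\|X_{S'}^\inv\|_2 \le \kappa(X)$ to relate $\alpha$ to $\kappa^2(X)$, and thereby to the $\rho$ of the first initial-solution option via Lemma~\ref{knownFacts}.
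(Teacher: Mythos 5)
Your proposal is correct and takes essentially the same approach as the paper, whose proof is the one-line remark that combining one of the three initial solutions with the greedy extension machinery of Section~\ref{main} completes the argument. You simply make explicit the details the paper leaves implicit: that CSS is SMLR with $Y=X$ (hence weakly-$\alpha$-supermodular with the stated $\alpha$), that Theorem~\ref{multiplicative} is applied with $\eps=\delta$, and that the $O(k)$-sized initial set is absorbed into the $O\left(\alpha k \ln \frac{\rho}{\delta}\right)$ bound.
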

\begin{proof}
Combining one of the above results with the algorithm from Section \ref{main} completes the proof.
\end{proof}
\section{Acknowledgments}
We would like to thanks Sergei Vassilvitskii and Dan Feldman for their guidance and for Petros Drineas and for pointing out the vulnerability of Natarajan's proof.

\bibliographystyle{unsrt}
%\bibliography{supermodular}

\begin{thebibliography}{10}

\bibitem{NemhauserWF1978}
G.L. Nemhauser, L.A. Wolsey, and M.L. Fisher.
\newblock An analysis of approximations for maximizing submodular set
  functions---i.
\newblock {\em Mathematical Programming}, 14(1):265--294, 1978.

\bibitem{Natarajan1995}
B.~K. Natarajan.
\newblock Sparse approximate solutions to linear systems.
\newblock {\em SIAM J. Comput.}, 24(2):227--234, April 1995.

\bibitem{AggarwalDK09}
Ankit Aggarwal, Amit Deshpande, and Ravi Kannan.
\newblock Adaptive sampling for k-means clustering.
\newblock In {\em Approximation, Randomization, and Combinatorial Optimization.
  Algorithms and Techniques, 12th International Workshop, {APPROX} 2009, and
  13th International Workshop, {RANDOM} 2009, Berkeley, CA, USA, August 21-23,
  2009. Proceedings}, pages 15--28, 2009.

\bibitem{ArthurV07}
David Arthur and Sergei Vassilvitskii.
\newblock k-means++: the advantages of careful seeding.
\newblock In {\em SODA}, pages 1027--1035, 2007.

\bibitem{FeldmanFSS2007}
Dan Feldman, Amos Fiat, Micha Sharir, and Danny Segev.
\newblock Bi-criteria linear-time approximations for generalized
  k-mean/median/center.
\newblock In {\em Proceedings of the Twenty-third Annual Symposium on
  Computational Geometry}, SCG '07, pages 19--26, New York, NY, USA, 2007. ACM.

\bibitem{FeldmanL2011}
Dan Feldman and Michael Langberg.
\newblock A unified framework for approximating and clustering data.
\newblock In {\em Proceedings of the Forty-third Annual ACM Symposium on Theory
  of Computing}, STOC '11, pages 569--578, New York, NY, USA, 2011. ACM.

\bibitem{MMSW}
K.~Makarychev, Y.~Makarychev, M.~Sviridenko, and J.~Ward.
\newblock A bi-criteria approximation algorithm for k means.
\newblock In {\em submitted to COLT 2015.}, 2015.

\bibitem{SSP1}
G.~H. Golub.
\newblock Numerical methods for solving linear least squares problems.
\newblock {\em Numer. Math.}, 7:206--216, 1965.

\bibitem{SSP2}
M.~Gu and S.~C. Eisenstat.
\newblock Efficient algorithms for computing a strong efficient algorithms for
  computing a strong rank-revealing qr-factorization.
\newblock {\em SIAM Journal on Scientific Computing}, 17(848--869), 1996.

\bibitem{SSP3}
T.F. Chan and P.~C.Hansen.
\newblock Some applications of the rank revealing qr factorization.
\newblock {\em SIAM Journal on Scientific and Statistical Computing}, 13:727,
  1992.

\bibitem{SviridenkoW13}
Maxim Sviridenko, Jan Vondrak, and Justin Ward.
\newblock Optimal approximation for submodular and supermodular optimization
  with bounded curvature.
\newblock {\em In Proceedings of SODA 2015}, pages 1134--1148, 2014.

\bibitem{apx1}
Amit Deshpande, Luis Rademacher, Santosh Vempala, and Grant Wang.
\newblock Matrix approximation and projective clustering via volume sampling.
\newblock {\em Theory of Computing}, 2:225--247, 2006.

\bibitem{apx2}
A.Deshpande and L.~Rademacher.
\newblock Efficient volume sampling for row/column subset selection.
\newblock In {\em Proceedings of the 42th Annual ACM Symposium on Theory of
  Computing (STOC)}, 2010.

\bibitem{CB}
C.~Boutsidis, P.~Drineas, and M.~Magdon-Ismail.
\newblock Near-optimal column-based matrix reconstruction.
\newblock {\em SIAM Journal on Computing}, 43(2):687--717, 2014.

\end{thebibliography}

\end{document}